% Frank.Nielsen@acm.org
% V2
% October 5th 2019

\documentclass[11pt]{article}
 
\usepackage{fullpage,amsmath,amssymb,url,hyperref}
\usepackage{graphicx}

\sloppy

\def\QED{\ensuremath{{\square}}}
\def\markatright#1{\leavevmode\unskip\nobreak\quad\hspace*{\fill}{#1}}

\def\eB{\mathrm{eB}}
\def\KL{\mathrm{KL}}

\def\du{\mathrm{d}u}

\def\calH{\mathcal{H}}
\def\calL{\mathcal{L}}
\def\calX{\mathcal{X}}
\def\calF{\mathcal{F}}
\def\calP{\mathcal{P}}
\def\calQ{\mathcal{Q}}
\def\calC{\mathcal{C}}

\def\dmu{\mathrm{d}\mu}
\def\supp{\mathrm{supp}}

\def\leftsup#1{{}^{#1}}

\def\eJ{\mathrm{eJ}}

\def\qccvJ{\leftsup{\mathrm{qccv}}J}
\def\qcvxJ{\leftsup{\mathrm{qcvx}}J}
\def\qcvxB{\leftsup{\mathrm{qcvx}}B}

\def\eqdef{:=}
\def\bbR{\mathbb{R}}

\newtheorem{definition}{Definition}
\newtheorem{theorem}{Theorem}
\newtheorem{property}{Property}
\newtheorem{remark}{Remark}

\newenvironment{proof}
 {\begin{trivlist}\item[\hskip\labelsep{\bf Proof.}]}
 {\markatright{\QED}\end{trivlist}}

\title{A note on the quasiconvex Jensen divergences and the quasiconvex Bregman divergences derived thereof}

\author{
Frank Nielsen\footnote{E-mail: {\tt Frank.Nielsen@acm.org}. Web: \url{https://FrankNielsen.github.io/}} \and 
Ga\"etan Hadjeres\footnote{E-mail: {\tt Gaetan.Hadjeres@sony.com}}
}

\date{$^*$ Sony Computer Science Laboratories Inc., Tokyo, Japan\\
$^\dagger$ Sony Computer Science Laboratories, Paris, France}

\begin{document}
\maketitle

\begin{abstract}
We first introduce the class of strictly quasiconvex and strictly quasiconcave Jensen divergences which are oriented (asymmetric) distances, and study some of their properties.
We then define the strictly quasiconvex Bregman divergences as the limit case of  scaled and skewed quasiconvex Jensen divergences, and report a simple closed-form formula which shows that these divergences are only  pseudo-divergences at countably many inflection points of the generators. 
To remedy this problem, we propose the $\delta$-averaged quasiconvex Bregman divergences which integrate the pseudo-divergences over a small neighborhood in order obtain a proper divergence. The formula of $\delta$-averaged quasiconvex Bregman divergences extend even to non-differentiable strictly quasiconvex generators.
These quasiconvex Bregman divergences between distinct elements have the property to always have  one orientation finite while the other orientation is infinite.
We show that these quasiconvex Bregman divergences can also be interpreted as limit cases of generalized skewed Jensen divergences with respect to comparative convexity by using power means.
Finally, we illustrate how these quasiconvex Bregman divergences naturally appear as  equivalent divergences for the Kullback-Leibler divergences between probability densities belonging to a same parametric family of distributions with nested supports.
\end{abstract}

\noindent {\bf Keywords}: oriented forward and reverse distances, Jensen divergence, Bregman divergence, \break quasiconvexity, inflection points, comparative convexity, power means, nested densities.

%%%%
\section{Introduction, motivation, and contributions}
%%%%%%%
A {\em dissimilarity} $D(O,O')$ is a measure of the deviation of an object $O'$ from a reference object $O$ (i.e., $D_O(O')\eqdef D(O,O')$) which
 satisfies the following two basic properties:
\begin{description}
	\item[Non-negativity.] $D(O,O')\geq 0, \forall O,O'$
	\item[Law of the indiscernibles.] $D(O,O')=0$ if and only if $O=O'$.
\end{description}
In other words, a dissimilarity $D(O,O')$ satisfies $D(O,O')\geq 0$ with equality if and only if $O=O'$.
A {\em pseudo-dissimilarity} is a measure of deviation for which the
non-negativity property holds but not necessarily the law of the indiscernibles~\cite{HolderDiv-2017}.
The objects can be vectors, probability distributions, random variables, strings, graphs, etc.
In general, a dissimilarity may not be symmetric, i.e., potentially we may have $D(O,O')\not=D(O',O)$.
In that case, the dissimilarity is said to be {\em oriented}, and we consider the following two reference orientations of the dissimilarity: 
the {\em forward ordinary dissimilarity} $D(O:O')$ and its associated {\em reverse dissimilarity} $D^r(O:O')\eqdef D(O':O)$. 
Notice that we used the ':' notation instead of the comma delimiter ',' between the dissimilarity arguments to emphasize that the dissimilarity may be asymmetric.
In the literature, a dissimilarity is also commonly called {\em a divergence}~\cite{IG-2016} although several additional meanings may be associated to this term like a dissimilarity between {\em probability distributions} instead of vectors (e.g., the Kullback-Leibler divergence~\cite{CT-2012} in information theory) or like a notion of smoothness (e.g., a $C^3$ contrast function in information geometry~\cite{IG-2016}). A dissimilarity may also be loosely called a {\em distance} although this may convey to mathematicians in some contexts the additional notion of a dissimilarity satisfying the metric axioms (non-negativity, law of the indiscernibles, symmetry and triangular inequality).

The {\em Bregman divergences}~\cite{Bregman-1965,Bregman-1967} were introduced in operations research, and are widely used nowadays in machine learning and information sciences. For a strictly convex and smooth generator $F$, called the {\em Bregman generator}, we define the corresponding Bregman divergence between parameter vectors $\theta$ and $\theta'$ as:
\begin{equation}
B_F(\theta:\theta')= F(\theta)-F(\theta')-(\theta-\theta')^\top \nabla F(\theta').
\end{equation}
Bregman divergences are always finite, and generalize many common distances~\cite{BD-2005}, including the Kullback-Leibler (KL) divergence and the squared Euclidean and Mahalanobis distances. 
Furthermore, the KL divergence between two probability densities belonging to a same exponential family~\cite{EF-Barndorff-2014,BD-2005}  amount to a {\em reverse Bregman divergence} between the corresponding parameters when setting the Bregman generator to be  the cumulant function of the exponential family~\cite{KLEFBD-2001}.
Moreover, a bijection between regular exponential families~\cite{EF-Barndorff-2014} and the so-called class of ``regular Bregman divergences'' was reported in~\cite{BD-2005} and used for learning statistical mixtures showing that the expectation-maximization algorithm is equivalent to a Bregman soft clustering algorithm.
Bregman divergences have been extended to many non-vector data types like matrix arguments~\cite{MatrixBD-2013} or functional arguments~\cite{FunctionalBD-2008}.

In this note, we consider defining the notion of Jensen divergences~\cite{BR-2011} for strictly quasiconvex or strictly quasiconcave generators, and the induced notion of Bregman divergences. 
We term them {\em quasiconvex Bregman divergences} (and omit to prefix it by 'strictly' for sake of brevity). 
We then establish a connection between the KL divergence between parametric families of densities with nested supports and these quasiconvex Bregman divergences.

We summarize our main contributions as follows:

\begin{itemize}
	\item By using quasiconvex generators instead of convex generators, 
	we define  the skewed quasiconvex Jensen  divergences (Definition~\ref{def:qcvxJ}) and derived thereof quasiconvex Bregman divergences (Definition~\ref{def:qcvxB} and Theorem~\ref{thm:qcvxB}).
	The quasiconvex Bregman divergences turn out to be only pseudo-divergences at  inflection points of the generator.
	Since this happens only at countably many points, we still loosely call them quasiconvex Bregman divergences.
	We can also integrate the quasiconvex Bregman (pseudo-)divergence over a small neighborhood  and obtain a
	 $\delta$-averaged quasiconvex Bregman divergence in~\S\ref{sec:deltaaveraged}. 
	The $\delta$-averaged quasiconvex Bregman divergence are also well-defined for strictly quasiconvex but not differentiable generators.
	Quasiconvex Bregman divergences between distinct parameters always have one orientation finite while the other one evaluates to infinity.
 
	\item We show that quasiconvex Jensen divergences and quasiconvex Bregman divergences can be reinterpreted  as generalized Jensen and Bregman divergences with {\em comparative convexity}~\cite{ComparativeConvexity-2018,CC-BD-2017} using power means in the limit case (\S\ref{sec:PowerMeanJ} and \S\ref{sec:PowerMeanJ}).
	
	\item We exhibit some parametric families of probability distributions with strictly nested supports such that the Kullback-Leibler divergences between them amount to  equivalent quasiconvex Bregman divergences (\S\ref{sec:StatParamDiv}).
	
\end{itemize}

The paper is organized as follows:
Section~\ref{sec:qvxJ} defines the quasiconvex and quasiconcave difference distances by analogy to Jensen difference distances~\cite{RaoBD-1985,BR-2011}, study some of their properties, and show how to obtain them as generalized Jensen divergences~\cite{CC-BD-2017} obtained from comparative convexity using power means. Henceforth their name: {\em quasiconvex Jensen divergences}. When the generator is quasilinear instead of quasiconvex, we call them {\em quasilinear Jensen divergences}.
We then define the quasiconvex Bregman divergences in \S\ref{sec:qvxBD} as limit cases of scaled and skewed quasiconvex Jensen divergences, and report a closed-form formula which highlights the fact that one orientation of the distance is always finite while the other one is always infinite (for divergences between distinct elements). 
Since the quasiconvex Bregman divergences are only pseudo-divergences at inflection points, we define the $\delta$-averaged quasiconvex Bregman divergences in \S\ref{sec:deltaaveraged}.
We also recover the formula by taking the limit case of power means Bregman divergences that were introduced using comparative convexity~\cite{CC-BD-2017}.

In \S\ref{sec:StatParamDiv}, we consider the problem of finding parametric family of probability distributions for which the Kullback-Leibler divergence amount to a quasiconvex Bregman divergence. We illustrate one example showing that nested supports of the densities ensure the property of having one orientation finite while the other one is infinite.
Finally, \S\ref{sec:concl} concludes this note and hints at applications perspectives of these quasiconvex Bregman divergences, including flat and hierarchical clustering.

%%%%%%
\section{Divergences based on inequality gaps of quasiconvex or quasiconcave generators}\label{sec:qvxJ}
%%%%%%

\subsection{Quasiconvex and quasiconcave difference dissimilarities}

In this work, a divergence or distance $D(\theta:\theta')$ refers to a dissimilarity such that $D(\theta:\theta')\geq 0$ 
with equality iff. $\theta=\theta'$. A pseudo-divergence or pseudo-distance only satisfies the non-negativity property but not necessarily the law of the indiscernibles of the dissimilarities.

Consider a function $Q:\Theta\subset\bbR^D\rightarrow \bbR$ which satisfies the following ``Jensen-type'' inequality~\cite{Boyd-2004} for any $\alpha\in (0,1)$:
\begin{equation}\label{eq:qgap}
Q((\theta\theta')_{\alpha}) < \max\{Q(\theta),Q(\theta')\}, \quad \theta\not=\theta'\in\Theta\subset\bbR,
\end{equation}
where $(\theta\theta')_{\alpha}\eqdef (1-\alpha)\theta+\alpha\theta'$ denotes the {\em weighted linear interpolation} of $\theta$ with $\theta'$, and $\Theta$ the parameter space. Function $Q$ is said {\em strictly quasiconvex}~\cite{Quasiconvex-1971,Bereanu-1972,QuasiconvexAnalysis-2007,Boyd-2004} as it relaxes the strict convexity inequality: 
\begin{equation}
Q((\theta\theta')_{\alpha}) < (1-\alpha)Q(\theta)+\alpha Q(\theta')\leq \max\{Q(\theta),Q(\theta')\}.
\end{equation}
Let $\calQ$ denote the space of such {\em strictly quasiconvex real-valued function},
and 
let $\calC$ denote the space of strictly convex functions.
We have  $\calC\subset\calQ$:
Any strictly convex function or any strictly increasing function is quasiconvex, but not necessarily the converse:
Some examples of quasiconvex functions which are not convex are $Q(\theta)=\sqrt{\theta}$, $Q(\theta)=\theta^3$, $Q(\theta,\theta')=\log(\theta^2+(\theta')^2)$, etc.
Decreasing and then increasing functions are quasiconvex but may not be necessarily smooth.
Some concave functions like $Q(\theta)=\log \theta$ are quasiconvex.
The sum of quasiconvex functions are not necessarily quasiconvex.
In the same spirit that function convexity can be reduced to set convexity via the epigraph representation of the function,
a function $Q$ is quasiconvex if the {\em level set} $L_\alpha\eqdef \{x: Q(x)\leq \alpha \}$ is (set) convex for all $\alpha\in\bbR$.
When $Q$ is univariate, a quasiconvex function is also commonly called {\em unimodal} (i.e., decreasing and then increasing function). 
Thus a multivariate quasiconvex function can be characterized as being unimodal along each line of its domain.
Figure~\ref{fig:exqcv} displays some examples of quasiconvex functions with one function that fails to be quasiconvex.
%Furthermore, let us mention that a continuous density $p(x)$ of $\bbR^d$ is {\em unimodal} if $-\log p(x)$ is quasiconvex and strongly unimodal when $-\log p(x)$ is convex  (see~\cite{EF-Barndorff-2014},\S 6.2).
Notice that strictly monotonic functions which are {\em both} strictly quasiconvex and strictly quasiconcave  are termed {\em strictly quasilinear}. 
The ceil function $\mathrm{ceil}(\theta)=\inf\{z\in\mathbb{Z}\ :\ z\geq\theta\}$ is an example of quasilinear function (idem for the floor function).
Another example, are the linear fractional functions $Q_{a,b,c,d}(\theta)=\frac{a^\top \theta+b}{c^\top \theta+d}$ which are quasilinear functions on the domain $\Theta=\{\theta\ :\ c^\top\theta+d>0\}$.
We denote by $\calL\subset\calQ$ the set of strictly quasilinear functions, and by $\calH$ the  set of strictly quasiconcave functions.

\begin{figure}%
\centering
\includegraphics[width=0.85\columnwidth]{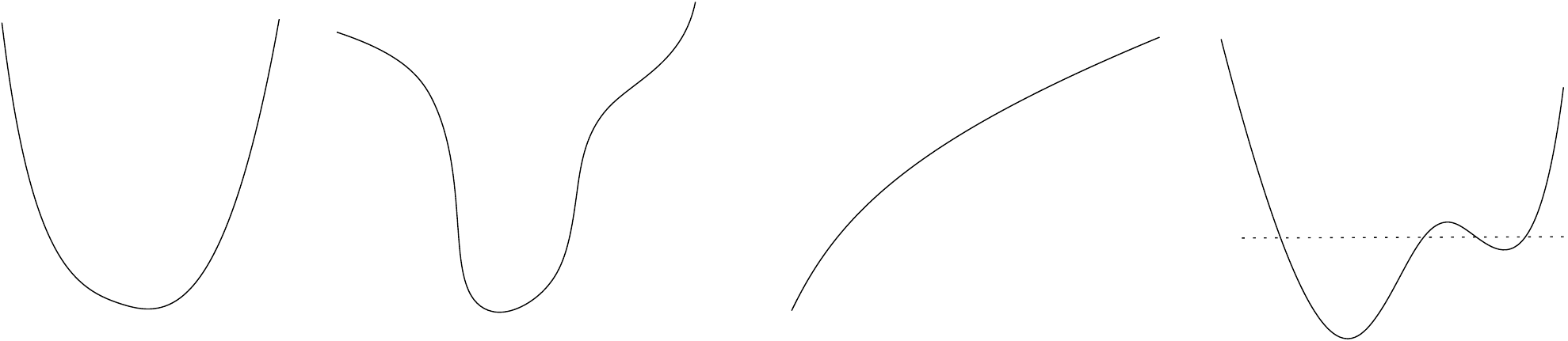}%

\caption{The first three functions (from left to right) are quasiconvex because any level set is convex, but the last function is not quasiconvex because the dotted line intersects the function in four points (and therefore the level set is not convex). The first function is convex, the second function is quasiconvex but not convex (a chord may intersect the function in more than two points), the third function is monotonous and here concave (quasilinear)).}%
\label{fig:exqcv}%
\end{figure}

\begin{definition}[Quasiconvex difference distance]\label{def:qcvxJ}
The {\em quasiconvex difference distance} (or qcvx distance for short) for $\alpha\in(0,1)$ is defined as the inequality difference gap of Eq.~\ref{eq:qgap}
\begin{eqnarray}\label{eq:qvexdist}  
\qcvxJ_Q^\alpha(\theta:\theta') &\eqdef& \max\{Q(\theta),Q(\theta')\} - Q((\theta\theta')_{\alpha}) \geq 0,\\
&=& \max\{Q(\theta),Q(\theta')\} - Q((1-\alpha)\theta+\alpha\theta')).
\end{eqnarray}
\end{definition}

By definition, the quasiconvex difference distance is a dissimilarity satisfying $\qcvxJ_Q^\alpha(\theta:\theta')=0$ iff. $\theta=\theta'$
 when the generator $Q$ is {\em strictly} quasiconvex (see Eq.~\ref{eq:qgap}).

\begin{remark}
Notice that we could also have defined a {\em log-ratio gap}~\cite{HolderDiv-2017} as a dissimilarity:
\begin{eqnarray}
\qcvxJ{}L_Q^\alpha(\theta:\theta') \eqdef -\log \left(\frac{Q((\theta\theta')_{\alpha})}{\max\{Q(\theta),Q(\theta')\}}\right).
\end{eqnarray}
However, in that case we should have required the extra condition that the generator does not vanish in the domain, i.e., 
$Q(\theta)\not=0$ for any $\theta\in\Theta$.
\end{remark}

\begin{property}
Let $a>0$ and $b\in\bbR$, and define $Q_{a,b}(\theta)=aQ(\theta)+b$. Functions $Q_{a,b}$ are quasiconvex, and
$\qcvxJ_{Q_{a,b}}^\alpha(\theta:\theta')=a\ \qcvxJ_Q^\alpha(\theta:\theta')$.
\end{property}

Similarly, we can characterize a {\em strictly quasiconcave real-valued function} $H\in\calH:\Theta\subset\bbR^D\rightarrow \bbR$ 
by the following inequality for $\alpha\in(0,1)$:
\begin{equation}
H((\theta\theta')_{\alpha}) > \min\{H(\theta),H(\theta')\}, \quad \theta\not=\theta'\in\Theta\subset\bbR^D.
\end{equation}
This allows one to define the {\em quasiconcave difference distance} (or qccv distance for short):

\begin{definition}[Quasiconcave difference distance]
For $Q$ a quasiconcave function and $\alpha\in(0,1)$, we define the quasiconcave distance as:
\begin{eqnarray}
\qccvJ_H^\alpha(\theta:\theta') &\eqdef&   H((\theta\theta')_{\alpha})- \min\{H(\theta),H(\theta')\},\\
&=& H((1-\alpha)\theta+\alpha\theta')- \min\{H(\theta),H(\theta')\}
\end{eqnarray}
\end{definition}
Similarly, we have $\qccvJ_{H_{a,b}}^\alpha(\theta:\theta')=a\ \qccvJ_H^\alpha(\theta:\theta')$
for $a>0$ and $b\in\bbR$.

Now, observe that for any $a,b\in\bbR$, we have\footnote{Indeed, $\max\{a,b\}=\frac{a+b}{2}+\frac{1}{2}|b-a|=-(\frac{-a-b}{2}-\frac{1}{2}|b-a|)=-(\frac{-a-b}{2}-\frac{1}{2}|-b+a|)=-\min\{-a,-b\}$.} $\min\{a,b\}=-\max\{-a,-b\}$ (or equivalently $\max\{a,b\}=-\min\{-a,-b\}$).
Thus it follows the following identity:

\begin{property}
A quasiconcave difference distance with quasiconcave generator $H$ is equivalent to a quasiconvex difference distance for the quasiconvevx generator $Q=-H$:

\begin{equation}
\qccvJ_H^\alpha(\theta:\theta')=\qcvxJ_{-H}^\alpha(\theta:\theta'),\quad \qcvxJ_{Q}^\alpha(\theta:\theta')=\qccvJ_{-Q}^\alpha(\theta:\theta').
\end{equation}
\end{property}

\begin{proof}
\begin{eqnarray}
\qccvJ_H^\alpha(\theta:\theta') &=&   H((\theta\theta')_{\alpha})- \min\{H(\theta),H(\theta')\},\\
&=&  \max\{-H(\theta),-H(\theta')\}-(-H((\theta\theta')_{\alpha})),\\
&=& \qcvxJ_{-H}^\alpha(\theta:\theta').
\end{eqnarray}
\end{proof}

Therefore, we consider without loss of generality quasiconvex difference distances in the reminder.

%%%
\subsection{Relationship of quasiconvex difference distances with Jensen difference distances}
%%%%

Since for any $a, b\in\bbR$, we have $\max(a,b)=\frac{a+b}{2}+\frac{1}{2}|b-a|$, $\min(a,b)=\frac{a+b}{2}-\frac{1}{2}|b-a|$ and 
$\max(a,b)-\min(a,b)=|b-a|$, we can rewrite Eq.~\ref{eq:qvexdist} to get
\begin{eqnarray}
\qcvxJ_Q^\alpha(\theta:\theta') &=& \frac{Q(\theta)+Q(\theta')}{2}+\frac{1}{2} \left|Q(\theta)-Q(\theta')\right|- Q((\theta\theta')_{\alpha}),\\
&=& \eJ_Q^\alpha(\theta:\theta')+\frac{1}{2} \left|Q(\theta)-Q(\theta')\right|+Q(\theta)\left(\alpha-\frac{1}{2}\right)+
Q(\theta')\left(\frac{1}{2}-\alpha\right),
\end{eqnarray}
where
\begin{eqnarray}
\eJ_Q^\alpha(\theta,\theta')\eqdef (Q(\theta)Q(\theta'))_{\alpha}-Q\left((\theta\theta')_\alpha\right),
\end{eqnarray}
is called the {\em extended Jensen divergence}, a Jensen-type divergence {\em extended} to quasiconvex generators instead of ordinary convex generators.  

\begin{property}[Upperbounded the extended Jensen divergence  by $\qcvxJ_Q^\alpha$]
We have:
\begin{eqnarray}
\eJ_Q^\alpha(\theta:\theta') \leq \qcvxJ_Q^\alpha(\theta:\theta')
\end{eqnarray}
since $(Q(\theta)Q(\theta'))_{\alpha}\leq \max\{Q(\theta),Q(\theta')\}$.
In particular, when $Q=F$ is strictly convex, we have $0\leq J_F^\alpha(\theta:\theta')\leq \qcvxJ_F^\alpha(\theta:\theta')$.
\end{property}

Notice that $\eJ_Q^\alpha(\theta,\theta')\geq 0$ when $Q$ is strictly convex, but may be negative when only quasiconvex.
For example, $Q(\theta)=\log\theta$ is a quasiconvex and concave function, and therefore $\eJ_Q^\alpha(\theta,\theta')\leq 0$.

When $\alpha=\frac{1}{2}$, we get the following identity:

\begin{property}[Regularization of extended Jensen divergences]
\begin{eqnarray}
\qcvxJ_Q(\theta:\theta') &=& \frac{Q(\theta)+Q(\theta')}{2}+\frac{1}{2}|Q(\theta)-Q(\theta')|- Q\left(\frac{\theta+\theta'}{2}\right),\\
&=& \eJ_Q(\theta,\theta')+\frac{1}{2}|Q(\theta)-Q(\theta')|,
\end{eqnarray}
where
\begin{eqnarray}
\eJ_Q(\theta,\theta')\eqdef \frac{Q(\theta)+Q(\theta')}{2}-Q\left(\frac{\theta+\theta'}{2}\right),
\end{eqnarray}
is an {\em extension} of the Jensen divergence~\cite{BR-1982,RaoBD-1985} to a quasiconvex generator $Q$.
\end{property}

Thus when the generator is convex, we can interpret the quasiconvex divergence as a $\ell_1$-regularization of the ordinary Jensen divergence.
When the generator $Q$ is not convex, beware that $\eJ_Q(\theta,\theta')$ may be negative but we always have
$\eJ_Q(\theta,\theta')\geq -\frac{1}{2}|Q(\theta)-Q(\theta')|$.

Similarly, when the generator $H$ is strictly quasiconcave, we rewrite the quasiconvex difference distance as
\begin{eqnarray}
\qccvJ_H(\theta:\theta') &=& H\left(\frac{\theta+\theta'}{2}\right) - \frac{H(\theta)+H(\theta')}{2}+\frac{1}{2}|H(\theta)-H(\theta')|,\\
&=& \eJ_{-H}(\theta,\theta')+\frac{1}{2}|H(\theta)-H(\theta')|.
\end{eqnarray}

%%%%%
\subsection{Quasiconvex difference distances: The viewpoint of comparative convexity}\label{sec:PowerMeanJ}
%%%%
In~\cite{CC-BD-2017}, a generalization of the skewed Jensen divergences with respect to comparative convexity~\cite{ComparativeConvexity-2018} is obtained using a pair of weighted means.
A {\em mean}  between two reals $x$ and $y$ belonging to an interval $I\subset\bbR$ is a bivariate function $M(x,y)$ such that
\begin{equation}
\min\{x,y\} \leq M(x,y)  \leq \max\{x,y\}. 
\end{equation}
That is, a mean satisfies the {\em in-betweeness property} (see~\cite{ComparativeConvexity-2018}, p. 328).
A {\em weighted mean} $M_\alpha$ for $\alpha\in [0,1]$ can always be built from a mean by using the dyadic expansion of real numbers, see~\cite{ComparativeConvexity-2018}.

Consider two {\em weighted means} $M_\alpha$ and $N_\alpha$. 

A function $F$ is said {\em $(M,N)$} convex iff:
\begin{equation}
N_\alpha(F(\theta),F(\theta')) \geq F(M_\alpha(\theta,\theta')),\quad \theta,\theta'\in\Theta.
\end{equation}
We recover the ordinary convexity when $M_\alpha=N_\alpha=A_\alpha$, where $A_\alpha(x,y)=(1-\alpha)x+\alpha y$ is the weighted arithmetic mean.

We can define the $\alpha$-skewed {\em $(M,N)$-Jensen divergence} as:
\begin{equation}\label{eq:sJCCD}
J_{F,\alpha}^{M,N}(\theta:\theta') \eqdef N_\alpha(F(\theta),F(\theta'))-F(M_\alpha(\theta,\theta')).
\end{equation}
By definition, $J_{F,\alpha}^{M,N}(\theta:\theta')\geq 0$ when $F$ is a $(M,N)$-strictly convex function.

A {\em quasi-arithmetic mean}~\cite{ComparativeConvexity-2018} is defined  for a continuous strictly increasing  function $f:I\subset\bbR \rightarrow J\subset\bbR$ as:
\begin{equation}
M_f(p,q) \eqdef f^{-1}\left( \frac{f(p)+f(q)}{2}  \right).
\end{equation}
These quasi-arithmetic means are  also called  Kolmogorov-Nagumo-de Finetti means~\cite{Kolmogorov-1930,Nagumo-1930,deFinetti-1931}.
Without loss of generality, we assume strictly increasing functions instead of monotonic functions since $M_{-f}=M_f$.
By choosing $f(x)=x$, $f(x)=\log x$ or $f(x)=\frac{1}{x}$,  we recover the Pythagorean arithmetic, geometric, and harmonic means, respectively.

Now, consider the family of {\em power means} for $x,y>0$:
\begin{equation}
P_0(x,y)\eqdef \sqrt{xy},\quad P_\delta(x,y) \eqdef \left(\frac{x^\delta+y^\delta}{2}\right)^{\frac{1}{\delta}},\quad \delta\not=0.
\end{equation}
These means  fall in the class of quasi-arithmetic means 
obtained  for $f_\delta(x)=x^\delta$  for $\delta\not =0$ with  $I=J=(0,\infty)$, and include in the limit cases the maximum and minimum values:
$\lim_{\delta\rightarrow +\infty} P_\delta(a,b)=\max\{a,b\}$ and $\lim_{\delta\rightarrow -\infty} P_\delta(a,b)=\min\{a,b\}$.

The {\em power mean Jensen divergence}~\cite{CC-BD-2017} is defined as a special case of the $(M,N)$-Jensen divergence by:
\begin{equation}
J_{F}^{P_\delta}(\theta:\theta') \eqdef J_{F}^{A,P_\delta}(\theta:\theta') = P_\delta(F(\theta),F(\theta'))-F((\theta\theta')_\alpha),
\end{equation}
for a $(A,P_\delta)$ strictly convex generator $F$.

Let us now observe that the quasiconvex difference distance is a {\em limit case} of power mean Jensen divergences:
 
\begin{property}[$\qcvxJ_Q$ as a limit case of power mean Jensen divergences]
We have
\begin{equation}
\qcvxJ_Q(\theta:\theta')=\lim_{\delta\rightarrow\infty} J_{F}^{P_\delta}(\theta:\theta').
\end{equation}
\end{property}

Notice that a strictly quasiconvex function $Q$ is interpreted as a $(A,\max)$-strictly convex function in comparative convexity, a limit case of $(A,P_\delta)$-convexity.
From now on, we term the quasiconvex difference distance the {\em quasiconvex Jensen divergence}.

%%%%
\section{Bregman divergences for quasiconvex generators}\label{sec:qvxBD}
%%%%%

\subsection{Quasiconvex Bregman divergences as limit cases of quasiconvex Jensen divergences}

Recall that for a strictly quasiconvex generator $Q$, define the {\em $\alpha$-skewed quasiconvex distance} for $\alpha\in(0,1)$ as
\begin{eqnarray}
\qcvxJ_Q^\alpha(\theta:\theta') \eqdef \max\{Q(\theta),Q(\theta')\} - Q((\theta\theta')_{\alpha}).
\end{eqnarray}
We have 
\begin{equation}
\qcvxJ_Q^\alpha(\theta:\theta')\geq 0,
\end{equation}
with equality if and only if $\theta=\theta'$.
Notice that we do not require smoothness~\cite{LB-2012} of $Q$, and $\qcvxJ_Q=\qcvxJ_Q^{\frac{1}{2}}$ is symmetric.
For an asymmetric divergence $D(\theta:\theta')$, denote $D^r(\theta:\theta')=D(\theta':\theta)$ the {\em reverse divergence}.

By analogy to Bregman divergences~\cite{BD-2005} being interpreted as {\em limit cases} of scaled and skewed Jensen divergences~\cite{Zhang-2004,BR-2011}:
\begin{eqnarray}
\lim_{\alpha\rightarrow 1^-} \frac{1}{\alpha(1-\alpha)} J_F^\alpha(\theta:\theta')&=& B_F(\theta:\theta'),\\
\lim_{\alpha\rightarrow 0^+} \frac{1}{\alpha(1-\alpha)} J_F^\alpha(\theta:\theta')&=& B_F^r(\theta:\theta')=B_F(\theta':\theta).
\end{eqnarray}

Let us  define the following divergence:

\begin{definition}[Quasiconvex Bregman pseudo-divergence]\label{def:qcvxB}
For a strictly quasiconvex generator $Q\in\calQ$, we define the quasiconvex Bregman pseudo-divergence as 
\begin{equation}\label{eq:qcvxB}
\boxed{\qcvxB_Q(\theta:\theta') \eqdef \lim_{\alpha\rightarrow 1^-} \frac{1}{\alpha(1-\alpha)}\qcvxJ_Q^\alpha(\theta:\theta')}.
\end{equation}
\end{definition}

As it will be shown below, we get only a pseudo-divergence in the limit case.

\begin{theorem}[Formula for the  quasiconvex Bregman pseudo-divergence]\label{thm:qcvxB}
For a strictly quasiconvex and differentiable generator $Q$, the quasiconvex Bregman pseudo-divergence is
\begin{equation}
\boxed{
\qcvxB_Q(\theta:\theta') =\left\{
\begin{array}{ll}
-(\theta-\theta')^\top \nabla Q(\theta') & \mbox{if $Q(\theta)\leq Q(\theta')$}\\
+\infty & \mbox{otherwise (i.e., $Q(\theta)> Q(\theta')$)}.
\end{array}
\right.
}
\end{equation}
\end{theorem}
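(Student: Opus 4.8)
The plan is to compute the limit
\[
\qcvxB_Q(\theta:\theta') = \lim_{\alpha\rightarrow 1^-} \frac{1}{\alpha(1-\alpha)}\left(\max\{Q(\theta),Q(\theta')\} - Q((\theta\theta')_{\alpha})\right)
\]
by splitting into the two cases dictated by which of $Q(\theta)$, $Q(\theta')$ is larger, and then performing a Taylor expansion of $Q$ near $\theta'$ (recall that as $\alpha\rightarrow 1^-$, the interpolation point $(\theta\theta')_{\alpha} = (1-\alpha)\theta + \alpha\theta'$ tends to $\theta'$). The key observation is that the denominator $\alpha(1-\alpha)$ tends to $0$ like $(1-\alpha)$, so the behavior of the numerator as $(1-\alpha)\rightarrow 0$ determines whether the limit is finite or infinite.

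First I would treat the case $Q(\theta)\leq Q(\theta')$. Here $\max\{Q(\theta),Q(\theta')\}=Q(\theta')$, so the numerator is $Q(\theta') - Q((\theta\theta')_{\alpha})$. Writing $(\theta\theta')_{\alpha}-\theta' = (1-\alpha)(\theta-\theta')$ and Taylor-expanding $Q$ about $\theta'$ gives
\[
Q((\theta\theta')_{\alpha}) = Q(\theta') + (1-\alpha)(\theta-\theta')^\top\nabla Q(\theta') + o(1-\alpha),
\]
so the numerator equals $-(1-\alpha)(\theta-\theta')^\top\nabla Q(\theta') + o(1-\alpha)$. Dividing by $\alpha(1-\alpha)$ and letting $\alpha\rightarrow 1^-$ (so $\alpha\rightarrow 1$ in the numerator's coefficient and the $o(1-\alpha)/(1-\alpha)$ term vanishes) yields $-(\theta-\theta')^\top\nabla Q(\theta')$, which is the claimed finite value.

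Next I would treat the case $Q(\theta) > Q(\theta')$. Now $\max\{Q(\theta),Q(\theta')\}=Q(\theta)$, and since $Q((\theta\theta')_{\alpha})\rightarrow Q(\theta')$ as $\alpha\rightarrow 1^-$, the numerator tends to $Q(\theta)-Q(\theta') > 0$, a strictly positive constant. Dividing a quantity converging to a positive constant by $\alpha(1-\alpha)\rightarrow 0^+$ forces the ratio to $+\infty$. This gives the second branch of the formula.

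The main subtlety to address is the boundary behavior at the case split, i.e.\ understanding why the two orientations differ so sharply and reconciling the $Q(\theta)=Q(\theta')$ sub-case, where the $\max$ equals $Q(\theta')$ and the first computation applies cleanly. I expect the only genuine obstacle is justifying the Taylor remainder estimate rigorously (using differentiability of $Q$ at $\theta'$) and confirming that in the infinite case the numerator is bounded away from zero uniformly as $\alpha\rightarrow 1^-$; both follow from continuity of $Q$ and the strict inequality $Q(\theta)>Q(\theta')$, so no heavy machinery is needed beyond a first-order expansion and a limit of a ratio.
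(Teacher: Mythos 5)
Your proof is correct and follows essentially the same route as the paper's: a first-order Taylor expansion of $Q$ about $\theta'$ along $(\theta\theta')_\alpha$, followed by the case split on $\max\{Q(\theta),Q(\theta')\}$, giving $-(\theta-\theta')^\top\nabla Q(\theta')$ when $Q(\theta)\leq Q(\theta')$ and $+\infty$ when the numerator tends to the positive constant $Q(\theta)-Q(\theta')$. If anything, your explicit $o(1-\alpha)$ remainder bookkeeping and your handling of the boundary sub-case $Q(\theta)=Q(\theta')$ are slightly more careful than the paper's informal expansion.
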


\begin{proof}
By definition, we have 
$$
\qcvxB_Q(\theta:\theta')=\lim_{\alpha\rightarrow 1^-} \frac{1}{\alpha(1-\alpha)} \left(\max\{Q(\theta),Q(\theta')\}-Q((\theta\theta')_\alpha)\right).
$$
Applying a first-order Taylor expansion to $Q\left((\theta\theta')_\alpha\right)$, we get 
\begin{equation}
Q\left((\theta\theta')_\alpha)\right)\simeq_{\alpha\rightarrow 1} Q(\theta')-(1-\alpha)(\theta-\theta')^\top \nabla Q(\theta').
\end{equation}
Thus we have
\begin{equation}
\qcvxB_Q(\theta:\theta')=\lim_{\alpha\rightarrow 1^-} \frac{1}{\alpha(1-\alpha)} \left(\max\{Q(\theta),Q(\theta')\}-Q(\theta')-(1-\alpha)(\theta-\theta')^\top \nabla Q(\theta')\right).
\end{equation}

Consider the following two cases:
\begin{itemize}
	\item Case $\max\{Q(\theta),Q(\theta')\}=Q(\theta')$: That is, $Q(\theta')\geq Q(\theta)$.
	Then it follows that
\begin{eqnarray}
\qcvxB_Q(\theta:\theta') &=&\lim_{\alpha\rightarrow 1^-} \frac{1}{\alpha(1-\alpha)} \left(-(1-\alpha)(\theta-\theta')^\top \nabla Q(\theta')\right),\\
 &=&-(\theta-\theta')^\top \nabla Q(\theta').
\end{eqnarray}

\item Case $\max\{Q(\theta),Q(\theta')\}=Q(\theta)$: That is, $Q(\theta)\geq Q(\theta')$.
Then we have
$$
\qcvxB_Q(\theta:\theta')=\lim_{\alpha\rightarrow 1^-} \frac{1}{\alpha(1-\alpha)} \left(Q(\theta)-Q(\theta')-(1-\alpha)(\theta-\theta')^\top \nabla Q(\theta')\right).
$$
We have $\lim_{\alpha\rightarrow 1^-} Q(\theta)-Q(\theta')-(1-\alpha)(\theta-\theta')^\top \nabla Q(\theta')=Q(\theta)-Q(\theta')=\Delta_Q(\theta:\theta')$ that is finite and different from $0$ when $\theta\not=\theta'$, and therefore $\lim_{\alpha\rightarrow 1^-} \frac{1}{\alpha(1-\alpha)}\Delta_Q(\theta:\theta')=+\infty$.

\end{itemize}

Let us now prove the axiom of non-negativity and disprove the law of the indiscernibles at inflection points for the  quasiconvex Bregman pseudo-divergences.

\begin{figure}
\centering
\includegraphics[width=0.5\textwidth]{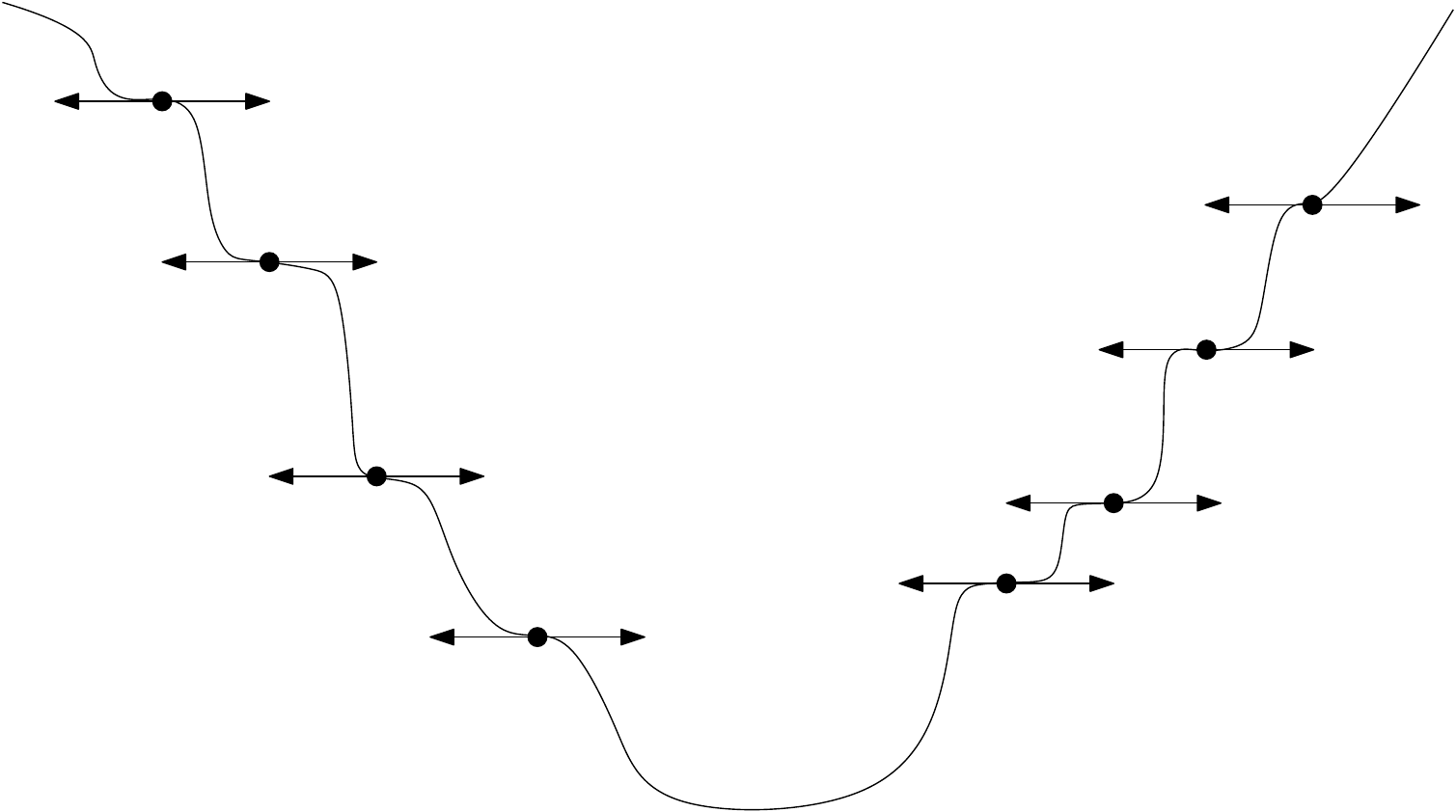}
\caption{An example of a strictly quasiconvex function $Q$ with (countably) many inflection points (at locations $\theta_i$'s) for which the derivative vanishes $Q'(\theta_i)=0$ and the second derivative $Q''$ changes sign at the $\theta_i$'s. \label{fig:inflectionpoint}  }
\end{figure}

\begin{itemize}
	\item Law of the indiscernibles:
Clearly, $\qcvxB_Q(\theta:\theta)=0$ for all $\theta\in\Theta$. 
So consider $\theta\not=\theta'$, and $\qcvxB_Q(\theta:\theta')=-\nabla Q(\theta')^\top(\theta-\theta')=0$ for $Q(\theta')\geq Q(\theta)$.
It is enough to consider the 1D case, by considering the divergence restricted to the line passing through $\theta$ and $\theta'$ intersected by the domain $\Theta$.
We may have countably many inflection points $\theta'$ for which $Q'(\theta')=0$.
At those inflection points, we may find $\theta\not=\theta'$ such that $\qcvxB_Q(\theta:\theta')=0$.
Thus the quasiconvex Bregman divergence {\em does not} satisfy the law of the indiscernibles.
Figure~\ref{fig:inflectionpoint} displays an example of such a quasiconvex function with a few  inflection points.

For example, consider the strictly quasiconvex generator $Q(x) = x^3$, with $\theta < 0$ and $\theta' = 0$.
We have:
 \begin{equation}
  \qcvxJ^\alpha_Q(\theta: \theta') = \mathrm{max}\{Q(\theta), Q(\theta')\}
   - Q((1 - \alpha) \theta + \alpha \theta') =  - (1-\alpha)^3
   \theta^3 > 0.
   \end{equation}

Defining the corresponding quasiconvex Bregman divergence by taking the limit of scaled quasiconvex Jensen divergence yields

   \begin{equation}
 \qcvxB_Q  \lim_{\alpha \to 1} \frac{1}{\alpha(1-\alpha)}
 \qcvxJ^\alpha_Q(\theta: \theta') = \lim_{\alpha \to 1^-} 
- \frac{(1 - \alpha)^2}{\alpha} \theta^3 = 0. 
 \end{equation}

Thus the quasiconvex Bregman divergence is only a pseudo-divergence at countably many inflection points.
Section~\ref{sec:deltaaveraged} will overcome this problem by introducing the $\delta$-averaged quasiconvex Bregman divergence.

 \item Non-negativity follows from a classic theorem of quasiconvex analysis  which reports a first-order condition for a function to be quasiconvex\footnote{By analogy to a classic second-order condition for a strictly convex and differentiable function $F$  to be convex: To have its Hessian $\nabla^2$ positive-definite (Alexandrov's theorem). Similarly, the first-order condition for convexity of a function states that a differentiable function $F$ with convex domain is convex iff. $F(\theta)\geq F(\theta')+(\theta-\theta')^\top \nabla F(\theta')$ from which we recover the Bregman divergence: $B_F(\theta:\theta')=F(\theta)-F(\theta')-(\theta-\theta')^\top \nabla F(\theta')\geq 0$. }:
A $C^1$ function $Q:\Theta\subset \bbR^D\rightarrow \bbR$ is quasiconvex iff. the following property holds (see Theorem~21.14 of~\cite{MathEco-1994} and \S 3.4.3 of~\cite{Boyd-2004}):
\begin{equation}
Q(\theta')\geq   Q(\theta) \Rightarrow \nabla Q(\theta') (\theta-\theta')\leq 0.
\end{equation} 
That is equivalent to $\nabla Q(\theta')^\top(\theta-\theta')\leq 0$ or $\qcvxB_Q(\theta:\theta')=-\nabla Q(\theta')^\top(\theta-\theta')\geq 0$.

Notice that when $Q=F$ is strictly convex and differentiable, then the property also follows from the non-negativity of the corresponding Bregman divergence $B_F(\theta:\theta')\geq 0$ and $F(\theta')\geq F(\theta)$:
\begin{eqnarray}
&& F(\theta)-F(\theta')-(\theta-\theta')^\top \nabla F(\theta')\geq 0,\\
&&\underbrace{-(\theta-\theta')^\top \nabla F(\theta')}_{\qcvxB_F(\theta:\theta')} \geq F(\theta')-F(\theta)\geq 0.
\end{eqnarray}

\end{itemize}

\end{proof}

Notice that $-(\theta-\theta')^\top \nabla Q(\theta')=(\theta'-\theta)^\top \nabla Q(\theta')\geq 0$ when $Q(\theta)\leq Q(\theta')$.
Figure~\ref{fig:qb} illustrates the quasiconvex Bregman divergence for a strictly quasiconvex generator which is strictly concave and has no inflection point.

An interesting property is that if $\qcvxB_Q(\theta:\theta') < \infty$ for $\theta\not=\theta'$ then necessarily $\qcvxB_Q(\theta':\theta) = \infty$, and vice-versa  (when both parameters are not at inflection points).
The forward $\qcvxB_Q$ and reverse $\qcvxB_Q^r$ quasiconvex Bregman pseudo-divergences are both finite only when $Q(\theta)=Q(\theta')$ and
then we have $\qcvxB_Q(\theta:\theta)=0$ or when one parameter is an inflection point.

Moreover, we have the following decomposition for a quasiconvex function $Q\in\calQ$:

\begin{equation}
\eB_Q(\theta:\theta') = Q(\theta)-Q(\theta')+\qcvxB_Q(\theta:\theta'),
\end{equation}
when $Q(\theta)\leq Q(\theta')$,
where $\eB_Q$ stands for the {\em extended Bregman divergence}, i.e., the Bregman divergence extended to a quasiconvex generator.

\begin{figure}%
\centering
\includegraphics[width=0.6\columnwidth]{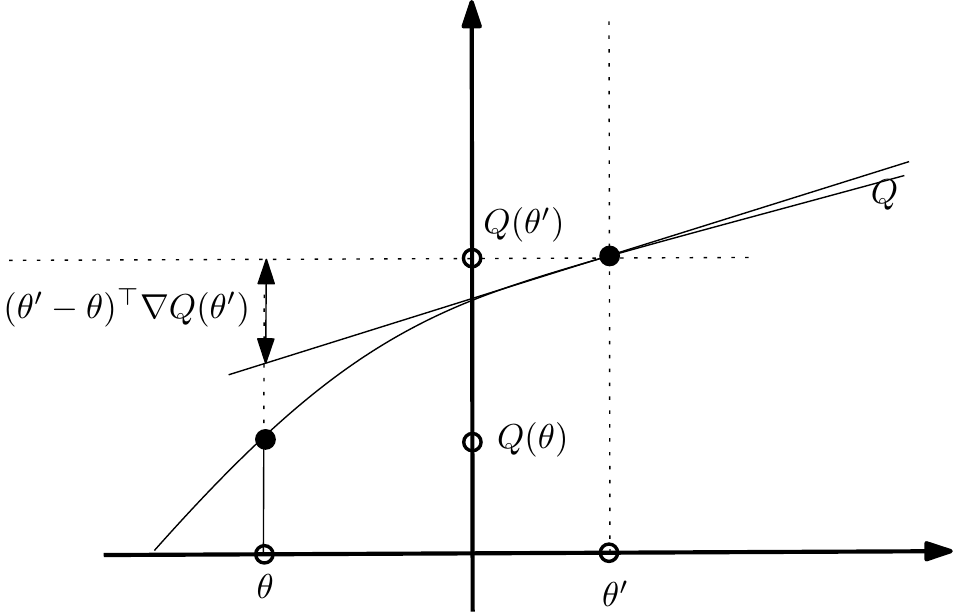}%
\caption{Illustration of the quasiconvex Bregman divergence for a  strictly quasilinear function $Q$ chosen to be concave (e.g. logarithmic type).}%
\label{fig:qb}%
\end{figure}

\begin{remark}[Separability/non-separability of generators and divergences]
When the  $D$-dimensional generator $Q$ is {\em separable}, i.e., $Q(\theta)=\sum_{i=1}^D Q_i(\theta_i)$ where $\theta=(\theta_1,\ldots,\theta_D)$ and the $Q_i$'s are differentiable and quasiconvex univariate functions, the quasiconvex Bregman divergence rewrites as 
\begin{equation}
\qcvxB_Q(\theta:\theta') =\left\{
\begin{array}{ll}
- \sum_{i=1}^D (\theta_i-\theta'_i)  Q_i'(\theta'_i) & \mbox{if $Q(\theta)\leq Q(\theta')$}\\
+\infty & \mbox{otherwise ($Q(\theta)> Q(\theta')$)}.
\end{array}
\right.
\end{equation}

Notice that the condition for the quasiconvex Bregman divergence to be infinite is $Q(\theta)> Q(\theta')$, and not that there exists one index $i\in\{1,\ldots, D\}$ such that $Q_i(\theta_i)> Q_i(\theta'_i)$.
Thus, we have $\qcvxB_Q(\theta:\theta')\not = \sum_{i=1}^D\qcvxB_{Q_i}(\theta_i:\theta'_i)$.
This is to contrast with Bregman divergences for which the separability of the generator $F(\theta)=\sum_{i=1}^D F_i(\theta_i)$ yields the separability of the divergence: $B_F(\theta:\theta')=\sum_{i=1}^D B_{F_i}(\theta_i:\theta'_i)$.
\end{remark}

%%%%%
\subsection{The   $\delta$-averaged quasiconvex Bregman divergence\label{sec:deltaaveraged}}
%%%%%

We shall overcome the problem of indiscernability for quasiconvex Bregman
pseudo-divergences:
\begin{equation}
 \qcvxB_Q(\theta: \theta') = (\theta' - \theta)  Q'(\theta') \quad
 \mathrm{for} \quad
  Q(\theta')  \geq Q(\theta).
\end{equation}

Since the number of inflection points is at most countable for a
strictly quasiconvex generator $Q$, the function $\theta \mapsto \qcvxB_Q(\theta: \theta')$ can
only be identically zero on a set of null measure.
We propose to 
integrate over a  neighborhood of the parameters to obtain a strictly
positive divergence when $\theta' \neq \theta$. 

Given a prescribed parameter
$\delta \neq 0$, we introduce the {\em $\delta$-averaged quasiconvex Bregman divergence}
$\qcvxB_Q^\delta$ via the following definition:
\begin{equation}
\qcvxB_Q^\delta(\theta, \theta') \eqdef \frac{1}{\delta}\int_0^\delta \qcvxB_Q(\theta +u: \theta'+u) \du.
\end{equation}

Choosing $\delta$ to be a strictly positive multiple of $\theta' - \theta$
ensures that this integral is always finite since $Q(\theta' +u) \geq
Q(\theta +u)$ for $u \in I(0, \delta)$, where $I(a, b):= \left\{ta +
(1-t)b, \quad t \in ]0,1[\right\}$ denotes the interval with endpoints $a$
and $b$. 

We now prove this claim. For all $u \in I(0, \delta)$, we have  $\theta'
\in I(\theta, \theta' +u)$ so that
\[Q(\theta') < \max\left\{ Q(\theta), Q(\theta' + u)\right\} = Q(\theta' +
u) \quad \mathrm{since} \quad Q(\theta) \leq Q(\theta').\]
Similarly, $\theta + u \in I(\theta,\theta')$ or  $\theta +
u \in I(\theta',\theta'+u)$.
In the first case, if $\theta + u \in I(\theta,\theta')$ we have
\[Q(\theta + u) < \max\left\{ Q(\theta), Q(\theta' +
  u)\right\} \leq Q(\theta' + u).\] In the second case,  $\theta +
u \in I(\theta',\theta'+u)$, and we obtain 
\[ Q(\theta + u) < \max\left\{ Q(\theta'), Q(\theta' +
    u)\right\} \leq Q(\theta' + u),\] proving the claim.

By construction, this $\delta$-averaged quasiconvex Bregman divergence now satisfies
the law of the indiscernables.

When $Q$ is differentiable, we obtain:
\begin{equation}
  \label{eq:1}
\qcvxB_Q^\delta(\theta, \theta') \eqdef \frac{1}{\delta}\int_0^\delta (\theta' -
\theta)Q'(\theta'+ u) du = (\theta' - \theta)\left(\frac{Q(\theta' +
  \delta) - Q(\theta')}{\delta}\right).
\end{equation}

We note that the rhs. of (\ref{eq:1}) can also serve as the definition of
the $\qcvxB^\delta_Q$ divergences, even when the  strictly quasiconvex function $Q$
is not differentiable. This motivates us to introduce the next
definition, where we now denote by $\delta > 0$ the positive ratio between
$\delta$ and $\theta' - \theta$ of the preceding section.

\begin{definition}[$\delta$-averaged quasiconvex Bregman divergence]
For a prescribed $\delta > 0$ and a strictly quasiconvex generator $Q$ not necessarily differentiable, the 
$\delta$-averaged quasiconvex Bregman divergence is defined by
\begin{equation}\label{eq:deltaaveraged}
\boxed{
\qcvxB_Q^\delta(\theta, \theta') \eqdef 
\left\{ 
\begin{array}{ll}
\frac{1}{\delta} \left(Q\left( \theta' +  \delta (\theta' - \theta)\right) -
  Q(\theta')\right) & \mbox{if $Q(\theta') \geq Q(\theta)$}\\
+\infty & \mbox{otherwise}
\end{array}
\right.
}
\end{equation}
\end{definition}

Let us report some examples of $\delta$-averaged quasiconvex Bregman divergences:
\begin{itemize}
\item $Q(x) = x$.
   \[\qcvxB_Q(\theta: \theta') = \frac{(1 + \delta)\theta' - \delta
       \theta - \theta'}{\delta} = \theta' - \theta, \]
   when $ \theta' \geq \theta$, or $+\infty$ otherwise.
   \item $Q(x) = x^2$.
   \[\qcvxB_Q(\theta: \theta') = 2 \theta'(\theta' - \theta) + \delta
     (\theta' - \theta)^2,\]
   when $ |\theta'| \geq |\theta|$,  or $+\infty$ otherwise.
      \item $Q(x) = x^3$.
   \[\qcvxB_Q(\theta: \theta') = 3 \theta'^2 (\theta' - \theta) + 3
     \theta' \delta (\theta' - \theta)^2 + \delta^2(\theta' -\theta)^3,\]
   when $ \theta' \geq \theta$,  or $+\infty$ otherwise.
   At the inflection point $\theta' = 0$, we now have
   \[ \qcvxB_Q(\theta: \theta') = -\delta^2 \theta^3 > 0 \quad \forall \theta
     < 0.\]
\end{itemize}

%%%
\subsection{Quasiconvex Bregman divergences as limit cases of power mean Bregman divergences}\label{sec:PowerMeanB}
%%%
For sake of simplicity, consider scalar divergences below.
In~\cite{CC-BD-2017}, the {\em $(M,N)$-Bregman divergence} is defined as the limit case:
\begin{equation}\label{eq:BCCD}
B_{F}^{M,N}(p:q) = \lim_{\alpha\rightarrow 1^-}   \frac{1}{\alpha(1-\alpha)}J_{F,\alpha}^{M,N}(p:q) = \lim_{\alpha\rightarrow 1^-}  \frac{1}{\alpha(1-\alpha)} \left( N_\alpha(F(p),F(q)))-F(M_\alpha(p,q)) \right).
\end{equation}

In particular, the univariate {\em power mean Bregman divergences} are obtained by taking the power means, yielding the following formula:

\begin{equation}
B^{\delta_1,\delta_2}_F(p:q) = \frac{F^{\delta_2}(p)-F^{\delta_2}(q)}{\delta_2 F^{\delta_2-1}(q)} - \frac{p^{\delta_1}-q^{\delta_1}}{\delta_1 q^{\delta_1-1}}F'(q).
\end{equation}

Let $\delta_2=r$ and $\delta_1=1$. Then we get the subfamily of $r$-power Bregman divergences:

\begin{eqnarray}
B^{r}_F(\theta:\theta') &=& \frac{F^{r}(\theta)-F^{r}(\theta')}{r F^{r-1}(\theta')}-(\theta-\theta')F'(\theta'),\\
&=&  = \frac{F^{r}(\theta)}{r F^{r-1}(\theta')}-  \frac{F(\theta')}{r}  -(\theta-\theta')F'(\theta').\label{eq:expan}
\end{eqnarray}

In Eq.~\ref{eq:expan}, when $F(\theta)>F(\theta')$ then we have $\lim_{r\rightarrow\infty} B^{r}_F(\theta:\theta')=\infty$
since $\left(\frac{F^r(\theta)}{F^{r-1}(\theta')}\right)$ diverges.
Otherwise $\qcvxB_F(\theta:\theta')=\lim_{r\rightarrow\infty} B^{r}_F(\theta:\theta')=-(\theta-\theta')F'(\theta')$ since $\lim_{r\rightarrow}  \frac{F(\theta')}{r} =0$ 
(because $|F(\theta')|<\infty$).

When $r\rightarrow\infty$, the power mean operator $P_r$ tends to the maximum operator: $\lim_{r\rightarrow\infty} P_r(a,b)=\max\{a,b\}$, and
the $(A,P_\delta)$-Bregman divergence tends to the quasiconvex Bregman pseudo-divergence.

\subsection{Some illustrating examples of quasiconvex Bregman divergences}

We concisely report two univariate quasiconvex scalar Bregman divergences:

\begin{itemize}

\item For $Q(\theta) = \theta$ with $\theta\in\bbR$, we have 
$$
\qcvxJ_Q^\alpha(\theta:\theta') = \max\{\theta,\theta'\} - (1-\alpha)\theta+\alpha\theta'.
$$

We consider the two cases for calculating the limit $\qcvxB_Q(\theta:\theta') =
 \lim_{\alpha\rightarrow 1^-} \frac{1}{\alpha(1-\alpha)}\qcvxJ_Q^\alpha(\theta:\theta')$:
\begin{itemize}

\item When $\theta'\geq\theta$: 
$$
\lim_{\alpha\rightarrow 1^-} \frac{1}{\alpha(1-\alpha)}\qcvxJ_Q^\alpha(\theta:\theta')=\lim_{\alpha\rightarrow 1^-} \frac{1}{\alpha(1-\alpha)} 
(-(1-\alpha)\theta+(1-\alpha)\theta')=\theta'-\theta\geq 0.
$$

\item When $\theta > \theta'$:
$$
\lim_{\alpha\rightarrow 1^-} \frac{1}{\alpha(1-\alpha)}\qcvxJ_Q^\alpha(\theta:\theta')=\lim_{\alpha\rightarrow 1^-} \frac{1}{\alpha(1-\alpha)} 
(\theta-(1-\alpha)\theta-\alpha\theta')=\lim_{\alpha\rightarrow 1^-}  \frac{1}{1-\alpha}(\theta-\theta') = +\infty. 
$$

\end{itemize}
Thus we have the following quasiconvex Bregman divergence:
$\qcvxB_Q(\theta:\theta')=\theta' - \theta$  for $\theta' \geq \theta$ and $+\infty$ when $\theta' < \theta$.

\item When $Q(\theta)=\log\theta$, we have $Q'(\theta)=\frac{1}{\theta}$ and
$\qcvxB_Q(\theta:\theta')=1-\frac{\theta}{\theta'}$ for $\log\theta' \geq \log\theta$ (i.e. $\theta'\geq \theta$) and 
$+\infty$ when $\theta' < \theta$.

\item For $Q(\theta)=\sqrt{\theta}$ and $\theta\in\Theta=(0,\infty)$, we have $Q'(\theta)=\frac{1}{2\sqrt{\theta}}$ and
$\qcvxB_Q(\theta:\theta')= \frac{1}{2} \left( \sqrt{\theta'} - \frac{\theta}{\sqrt{\theta'}}\right)$ for  $\sqrt{\theta'}\geq\sqrt{\theta}$ (i.e., $\theta' \geq \theta$), and $+\infty$ when $\theta' < \theta$.

\end{itemize}

%%%
\section{Statistical divergences, parametric families of distributions and equivalent parameter divergences}\label{sec:StatParamDiv}
%%%

Consider a probability space $(\calX,\calF,\mu)$ with $\calX$, $\calF$, and $\mu$ denoting the sample space, the $\sigma$-algebra and the positive measure, respectively.
The most celebrated {\em statistical divergence}   between two densities $p_{\theta}\ll\mu$ and $p_{\theta'}\ll \mu$ absolutely continuous with respect to a measure $\mu$ is the Kullback-Leibler (KL) divergence (also called {\em relative entropy}~\cite{CT-2012}), defined by:
\begin{equation}
\KL[p:q]=  \left\{
\begin{array}{ll}
\int_{x\in\calX} p(x)\log\frac{p(x)}{q(x)}\dmu(x), & \supp(p)\subset\supp(q),\\
+\infty, & \supp(p)\not\subset\supp(q).
\end{array}
\right.,
\end{equation}
where    $\supp(p)=\{x\in\bbR \ :\ p(x)>0\}$ denotes the {\em support} of a distribution $p(x)$, and $\log\frac{0}{0}=0$ by convention.
Thus the KL divergence is said unbounded in general.\footnote{The Jensen-Shannon divergence~\cite{JS-2019} is a particular symmetrization of the KL divergence which is always bounded, and may accept densities with different supports.}

In general, a statistical divergence between densities belonging to the same parametric family $\calP=\{p_{\theta}\}_\theta$ of mutually absolutely continuous densities is equivalent to a corresponding {\em parameter divergence} $B$:
\begin{equation}
B(\theta:\theta')\eqdef D[p_{\theta}:p_{\theta'}].
\end{equation}

For example, when $\calP=\{p_\theta(x)=\exp(x^\top\theta-F(\theta))\dmu(x)\}_\theta$ is an exponential family~\cite{MLE-EF-1997,EF-Barndorff-2014,BD-2005} on a probability space $(\calX,\calF,\mu)$, then the Kullback-Leibler divergence between two densities of the exponential family (e.g., two Gaussians distributions belonging to the Gaussian exponential family) amount to a {\em reverse} Bregman divergence~\cite{BD-2005} for the Bregman generator set to the cumulant function $F(\theta)=\log\int \exp(x^\top\theta)\dmu(x)$:

\begin{equation}
\KL[p_{\theta}:p_{\theta'}]={B}(\theta:\theta')={B_F}^r(\theta:\theta')=B_F(\theta':\theta).
\end{equation}

Banerjee et al.~\cite{BD-2005} proved a {\em bijection} between regular natural exponential families and so-called {\em regular} Bregman divergences.
Note that since the Csisz\'ar's $f$-divergence~\cite{fdiv-AliSilvey-1966,IG-2016} (including the KL divergence) is invariant to one-to-one smooth mapping $m(x)$ of the sample space $x$, the same Bregman divergence equivalent to the KL divergence can be obtained for different exponential families where $y=m(x)$.
For example, the KL divergence between two normal distributions or two ``equivalent'' log-normal distributions is the same (using the mapping $y=\log x$).
This can be also noticed by the matching of their cumulant function: $F_{\mathrm{normal}}(\theta)=F_{\mathrm{lognormal}}(\theta)$.

Quasiconvex Bregman divergences have the interesting property to be  finite for one orientation and  infinite for the other orientation.
Thus to find an example of parametric family of distributions which the KL divergence amount to a quasiconvex Bregman divergence, we shall consider parametric distributions with {\em nested supports} (or {\em nested densities}), so that one orientation of the KL divergence will be finite while the other  is will be equal to infinity.

For example, consider the family of univariate uniform densities ($D=1$): 
\begin{equation}
p_\theta(x) = 1_{  0 < x < e^\theta }\ e^{-\theta},
\end{equation}
where $1_A$  denotes the indicator function of $A$.
We have $\supp(p_{\theta'})\subset \supp(p_\theta)$ for $0<\theta'\leq \theta$.
Then we have

\begin{equation}
\KL[p_\theta:p_{\theta'}]=  \left\{
\begin{array}{ll}
\theta' - \theta = \qcvxB_Q(\theta: \theta') &  0<\theta\leq\theta',\\
+\infty & \theta'>\theta.
\end{array}
\right.,
\end{equation}
for $Q(\omega) = \omega$.

Notice that the family $\calP=\{p_\theta\}$ is not an exponential family since the family has not a fixed support.
A truncated exponential family with fixed truncation parameters yields an exponential family which may neither be regular nor steep (e.g., the singly truncated normal distributions~\cite{EFnonsteep-1994}).

Now, consider the parametric family $\{q_\theta\}_\theta$ of nested densities:
\begin{equation}
q_\theta(x) = 1_{0 < x < e^\theta}\alpha   \frac{x^{\alpha - 1}}{e^{\theta \alpha}},
\end{equation}
for a prescribed $\alpha> 1$.
After a short calculation (or using a computer algebra system as reported in Appendix~\ref{app:cas}), we find that
\begin{equation}
\KL[q_\theta:q_{\theta'}]=  \left\{
\begin{array}{ll}
\alpha (\theta' - \theta) = \qcvxB_Q(\theta: \theta') &  \theta'\geq\theta>0,\\
+\infty & \theta'<\theta.
\end{array}
\right.,
\end{equation}
for $Q(\omega) = \omega$.
Thus we have built several parametric families of nested densities that up to a scaling factor yields the same quasiconvex Bregman divergence.

For parametric densities belonging to the same exponential family, it is known that the Bhattacharrya distance amount to a Jensen divergence~\cite{BR-2011}.
For an exponential family $p_\theta(x)=\exp(\theta^\top x-F(\theta))\dmu(x)$ with cumulant function $F$, the cross-entropy between two densities~\cite{EF-2010} is
\begin{equation}
h(p_\theta:p_{\theta'})=\int -p_\theta(x)\log p_{\theta'}(x) \dmu(x) = F(\theta')-(\theta')^\top \nabla F(\theta),
\end{equation}
and the entropy is
\begin{equation}
h(p_\theta)=h(p_\theta:p_{\theta})= F(\theta)-\theta^\top \nabla F(\theta).
\end{equation}

Since $\KL(p_\theta:p_{\theta'})=B_F(\theta':\theta)=F(\theta')-F(\theta)-(\theta'-\theta)^\top\nabla F(\theta)$,
when $F(\theta')\leq F(\theta)$, we have $-(\theta'-\theta)^\top\nabla F(\theta)=\qcvxB_F(\theta':\theta)$, and it follows that

\begin{equation}
\qcvxB_F(\theta':\theta)=\KL(p_\theta:p_{\theta'})+F(\theta)-F(\theta'),\quad F(\theta')\leq F(\theta).
\end{equation} 

The Wasserstein distance between two nested univariate distributions has been studied in~\cite{NestedDensity-2017} with applications to Bayesian statistics to study the influence of the prior distribution in the posterior distribution in the finite sample size setting.

%%%
\section{Conclusion and perspectives}\label{sec:concl}
%%%%

We have introduced   novel families of distortions between vector parameters: The quasiconvex Jensen divergences and the quasiconvex Bregman divergences. We showed that the quasiconvex Jensen divergences measuring the difference gaps of the quasiconvex inequalities  can be interpreted as a $\ell_1$-regularized ordinary Jensen divergence.
We noticed that any quasiconcave Jensen divergence amounts to an equivalent quasiconvex Jensen divergence for the negative generator.
We then derived the quasiconvex Bregman pseudo-divergences as limit cases of scaled and skewed quasiconvex Jensen divergences for  strictly quasiconvex generators. The quasiconvex Bregman pseudo-divergences is a pseudo-divergence only at countably many inflection points of the generators. We thus propose to define the $\delta$-averaged quasiconvex Bregman divergences by integrating the pseudo-divergence over a small neighborhood. This yields a formula (Eq.~\ref{eq:deltaaveraged}) that can be used as the definition of the quasiconvex Bregman divergence even for non-differentiable strictly quasiconvex generators.
We also showed how to derive again the result of the quasiconvex Bregman pseudo-divergences using comparative convexity using the limit case of power means.
A key property of the  quasiconvex  Bregman divergences between distinct elements is that they are necessarily finite on one orientation and infinite for the opposite orientation.
Finally, we showed how some of these quasiconvex Bregman divergences can be obtained from the Kullback-Leibler divergence between densities belonging to the same parametric family of distributions with nested support.
We can retrieve the Bregman pseudo-divergences and quasiconvex Bregman pseudo-divergences from first-order convexity and quasiconvexity conditions, as illustrated in Table~\ref{tab:comparison}.
Additional conditions on the generators ensure that the pseudo-divergences are divergences and satisfy the law of the indiscernibles 
(i.e., strict convexity and differentiability for Bregman divergences and strict quasiconvexity without inflection points for the quasiconvex Bregman divergences).

In future work, we shall consider applications of these novel divergences like clustering: 
We note that the  generic $k$-means++ probabilistic seeding analysis reported in~\cite{tJ-2015} does not apply because of the forward/reverse infinite property of these quasiconvex Bregman divergences.
We may consider discrete $k$-means, $k$-center (with the minimum enclosing ball obtained from quasiconvex programming~\cite{QcvxProgramming-2005,QuasiconvexGeoOptRec-2007,qvexopt-2015,QcvxProgramming-2019} when $k=1$), and  quasiconvex Bregman hierarchical clustering~\cite{BregHClust-2012}.

\begin{table}
{\small
\begin{tabular}{|c|l||l|}\hline
& First-order  condition & Pseudo-divergence/condition for divergence \\ \hline\hline
Convexity & $F(\theta)\geq F(\theta')+(\theta-\theta')^\top \nabla F(\theta')$ & $B_F(\theta:\theta')=F(\theta)-F(\theta')+(\theta-\theta')^\top \nabla F(\theta')$ \\
 of $F$ &   \multicolumn{2}{c|}{  Divergence when $F$ strictly convex and differentiable}\\ \hline
Quasiconvexity & $Q(\theta)\leq Q(\theta') \Rightarrow (\theta-\theta')^\top \nabla Q(\theta')\leq 0$ & 
$\left\{
\begin{array}{ll}
-(\theta-\theta')^\top \nabla Q(\theta') & \mbox{if $Q(\theta)\leq Q(\theta')$}\\
+\infty & \mbox{otherwise}.
\end{array}
\right.$
\\
 of $Q$ &   \multicolumn{2}{c|}{  Divergence when $Q$ strictly quasiconvex with no inflection point}\\ \hline
\end{tabular}
}

\caption{Bregman divergence and Bregman quasidivergence with their relationship to first-order convexity and quasiconvexity.\label{tab:comparison}}
\end{table}

\appendix

%%%%
\section{Calculations using a computer algebra system\label{app:cas}}
%%%%
Using the computer algebra system {\sc Maxima}\footnote{Freely downloadable at \url{http://maxima.sourceforge.net/}}, we report the calculation of the KL divergence for nested densities.

\begin{verbatim}
assume(alpha>1);
assume(theta>0);
p(x,theta):=alpha*(x**(alpha-1))/(exp(theta*alpha));
integrate(p(x,theta),x,0,exp(theta)); 
assume(thetap>theta);
/* KL divergence */
integrate(p(x,theta)*log(p(x,theta)/p(x,thetap)),x,0,exp(theta));
\end{verbatim}

\bibliographystyle{plain}
\bibliography{QuasiConvexBregmanBIB}
\end{document}